\newtheorem{theorem}{Theorem}
\newtheorem{proposition}{Proposition}
\newtheorem{proof}{Proof}
\def\sinr {\mbox{\scriptsize\sf SINR}}
\newcommand{\PP}{\mathbb{P}}
\newcommand{\E}{\mathbb{E}}
\newcommand{\La}{\mathcal{L}}
\newcommand{\dd}{{\rm d}}
\newcommand{\R}{\mathbb{R}}
\newcommand{\G}{\Gamma}
\newcommand{\g}{\gamma}
\newcommand{\la}{\lambda}
\newcommand{\Mod}[1]{\ (\text{mod}\ #1)}
\begin{document}

\title{Passive Loop Interference Suppression in Large-Scale Full-Duplex Cellular Networks}

\author{
\IEEEauthorblockN{Constantinos Psomas and Ioannis Krikidis}
\IEEEauthorblockA{Department of Electrical and Computer Engineering, University of Cyprus, Cyprus}
\IEEEauthorblockA{e-mail: \{psomas, krikidis\}@ucy.ac.cy}
\thanks{This work was supported by the Research Promotion Foundation, Cyprus under the project KOYLTOYRA/BP-NE/0613/04 ``Full-Duplex Radio: Modeling, Analysis and Design (FD-RD)''.}\vspace{-3mm}}

\maketitle

\begin{abstract}
Loop interference (LI) in wireless communications, is a notion resulting from the full-duplex (FD) operation. In a large-scale network, FD also increases the multiuser interference due to the large number of active wireless links that exist. Hence, in order to realize the FD potentials, this interference needs to be restricted. This paper presents a stochastic geometry model of FD cellular networks where the users and base stations employ directional antennas. Based on previous experimental results, we model the passive suppression of the LI at each FD terminal as a function of the angle between the two antennas and show the significant gains that can be achieved by this method. Together with the reduction of multiuser interference resulting from antenna directionality, our model demonstrates that FD can potentially be implemented in large-scale directional networks.
\end{abstract}

\begin{IEEEkeywords}Full-duplex, cellular networks, stochastic geometry, performance analysis, loop interference, passive suppression.\end{IEEEkeywords}

\section{Introduction}
In large-scale cellular networks, the existence of interference at a receiver is an inevitable outcome of the concurrent operation of multiple transmitters at the same carrier frequency and time slot. It is a fundamental notion of wireless communications and thus methods to reduce it \cite{RH} and even exploit it \cite{JK} are of great interest. Conventionally, multiuser interference is restricted with the use of orthogonal channels which prevent intra-cell interference, i.e. a user (base station (BS)) experiences interference only from out-of-cell BSs (users). However, even though orthogonality assists in the reduction of multiuser interference, it limits the available spectrum. Given the increased use of wireless devices, especially mobile phones, the orthogonality schemes will be quite restrictive in the near future. Towards this direction, full-duplex (FD) is considered as a possible technology for the next generation of cellular networks.

FD radio refers to the simultaneous operation of both transmission and reception using non-orthogonal channels and hence its implementation could potentially double the spectral efficiency. Nevertheless, the use of non-orthogonal channels has the critical disadvantage of increasing the interference in a cellular network, which significantly degrades its performance \cite{CP}. Firstly, the existence of more active wireless links results in the escalation of both intra- and out-of-cell multiuser interference. Secondly, the non-orthogonal operation at a transceiver creates a loop interference (LI) between the input and output antennas. This aggregate interference at a receiver is the reason why FD has been previously regarded as an unrealistic approach in wireless communications. Indeed, the primary concern towards making FD feasible was how to mitigate the LI which has an major negative impact on the receiver's performance. Recently, many methods have been developed which successfully mitigate the LI; these methods can be active (channel-aware), e.g., \cite{TR}, passive (channel-unaware), e.g., \cite{AS1}, or a combination of the two, e.g., \cite{AS2}. With regards to the multiuser interference, a well-known approach for reducing it is the employment of directional antennas. By focusing the signal towards the receiver's direction, the antenna can increase the received power and at the same time decrease the interference it generates towards other directions. The significance of directional antennas in large-scale networks has been shown before. In \cite{AH}, the authors studied an ad-hoc network's performance under some spatial diversity methods and showed the achieved gains. Moreover, in \cite{Wang}, the impact on the performance of a downlink user in a heterogeneous cellular network with directional antennas was demonstrated. The employment of directional antennas in an FD context, provides the prospect of passively suppressing the LI with antenna separation techniques \cite{AS1}, \cite{AS2}. In \cite{CP}, FD cellular networks with omnidirectional antennas are investigated where the terminals only make use of active cancellation mechanisms.

In this paper, we consider FD cellular networks where the terminals employ directional antennas and therefore, in addition to the active cancellation, they can passively suppress the LI and also reduce the multiuser interference. The main contribution of this work is the modeling of the passive suppression as a function of the angle between the transmit and receive antennas. By deriving analytical expressions of the outage probability and the average sum rate of the network, we show the significant gains that can be achieved. The rest of the paper is organized as follows. Section \ref{sec:model} presents the network model together with the channel, interference and directional antenna model. Section \ref{sec:analysis} provides the main results of the paper and in Section \ref{sec:validation} the numerical results are presented which validate our analysis. Finally, the conclusion of the paper is given in Section \ref{sec:conclusion}.

\underline{Notation}: $\R^d$ denotes the $d$-dimensional Euclidean space, $b(x,r)$ denotes a two dimensional disk of radius $r$ centered at $x$, $\|x\|$ denotes the Euclidean norm of $x \in \R^d$, $N(A)$ represents the number of points in the area $A$, $\PP(X)$ denotes the probability of the event $X$ and $\E(X)$ represents the expected value of $X$.

\section{System Model}\label{sec:model}
FD networks can be categorized into two-node and three-node architectures \cite{JSAC}. The former, referred also as bidirectional, describes the case where both nodes, i.e., the user and the base station (BS), have FD-capabilities. The latter, also known as relay or BS architecture, describes the case where only the BS (or in other scenarios the relay) has FD-capabilities. In what follows, we consider both architectures in the case where each node employs a number of directional antennas.

\subsection{Network Model}
The network is studied from a large-scale point of view using stochastic geometry \cite{HAEN}. The locations of the BSs follow a homogeneous Poisson point process (PPP) $\Phi = \{ x_i: i = 1,2,\dots\}$ of density $\lambda$ in the Euclidean plane $\R^2$, where $x_i \in \R^2$ denotes the location of the $i^{\rm th}$ BS. Similarly, let $\Psi = \{ y_i: i = 1,2,\dots \}$ be a homogeneous PPP of the same density $\lambda$ but independent of $\Phi$ to represent the locations of the users. Assume that all BSs transmit with the same power $P_b$ and all users with the same power $P_u$. A user selects to connect to the nearest BS in the plane, that is, BS $i$ serves user $j$ if and only if $\|x_i-y_j\| < \|x_i-y_k\|$ where $y_k \in \Psi$ and $k \neq j$. Assuming the user is located at the origin $o$ and at a distance $r$ to the nearest BS, the probability density function (pdf) of $r$ is $f_r(r) = 2\pi \lambda re^{-\lambda \pi r^2}, ~r \geq 0$ \cite{HAEN}. Note that this distribution is also valid for the nearest distance between two users and between two BSs.

\subsection{Channel Model}
All channels in the network are assumed to be subject to both small-scale fading and large-scale path loss. Specifically, the fading between two nodes is Rayleigh distributed and so the power of the channel fading is an exponential random variable with mean $1/\mu$. The channel fadings are considered to be independent between them. The standard path loss model $\ell(x,y) = \|x-y\|^{-\alpha}$ is used which assumes that the received power decays with the distance between the transmitter $x$ and the receiver $y$, where $\alpha > 2$ denotes the path loss exponent. Throughout this paper, we will denote the path loss exponent for the channels between a BS and a user by $\alpha_1$. For the sake of simplicity, we will denote by $\alpha_2$ the path loss exponent for the channels between BSs and between users. Lastly, we assume all wireless links exhibit additive white Gaussian noise (AWGN) with zero mean and variance $\sigma_n^2$.

\subsection{Sectorized Directional Antennas}
Define as $M_b$ and $M_u$ the number of directional transmit/receive antennas employed at a BS and a user respectively. The main and side lobes of each antenna are approximated by a circular sector as in \cite{AH}. Therefore, the beamwidth of the main lobe is $2\pi/M_i$, $i \in \{b, u\}$. It is assumed that an active link between a user and a BS lies in the boresight direction of the antennas of both nodes, i.e., maximum power gain can be achieved. Note that $M_b = M_u = 1$ refers to the omni-directional case \cite{CP}. As in \cite{AH}, we assume that the antenna gain of the main lobe is $G_i = \left(\frac{M_i}{1+\g_i(M_i-1)}\right)$ where $\g_i$, $i \in \{b, u\}$ is the ratio of the side lobe level to the main lobe level. Therefore, the antenna gain of the side lobe is $H_i = \g_i G_i$, $i \in \{b, u\}$.

\subsection{Multiuser Interference}\label{subsec:inter}
The total multiuser interference at a node is the aggregate sum of the interfering received signals from the BSs of $\Phi$ and the uplink users of $\Psi$. In the two-node architecture, multiuser interference at any node results from both out-of-cell users and BSs. In the three-node architecture, the BS experiences multiuser interference from out-of-cell BSs and users, whereas the downlink user experiences additional intra-cell interference from the uplink user. When $M_b > 1$ or $M_u > 1$ the transmitters can interfere with a receiver in four different ways \cite{AH}:
\begin{itemize}
\item[1.] Transmitting towards a receiver in the main sector,
\item[2.] Transmitting away from a receiver in the main sector,
\item[3.] Transmitting towards a receiver outside the main sector,
\item[4.] Transmitting away from a receiver outside the main sector,
\end{itemize}
where the main sector is the area covered by the main lobe of the receiver.

Consider the interference received at a node $x^i \in \Phi \cup \Psi$ from all other network nodes $y^j \in \Phi \cup \Psi$, $i,j \in \{b,u\}$, $x^i \neq y^j$. To evaluate the interference, each case $k \in \{1,2,3,4\}$ needs to be considered separately. This results in each of the PPPs $\Phi$ and $\Psi$ being split into four thinning processes $\Phi_k$ and $\Psi_k$ with densities $\la_{i,j,k}$. Additionally, the power gain $\G_{i,j,k}$ of the link between $x^i$ and $y^j$ changes according to $k$. Table \ref{tbl:thin} provides the density and power gain for each case. Note that $\sum_{k=1}^4 \la_{i,j,k} = \la$ and when $M_b = M_u = 1$ the links have no gain, i.e., $\G_{i,j,k}=1 ~\forall~ i,j,k$.

\begin{table}[t]\centering\tabulinesep=1mm
  \begin{tabu}{| c | c | c | c | c |}\hline
    $k$ & 1 & 2 & 3 & 4\\ \hline
    $\la_{i,j,k}$ & $\frac{\la}{M_i M_j}$ & $\frac{\la(M_j-1)}{M_i M_j}$ & $\frac{\la(M_i-1)}{M_i M_j}$ & $\frac{\la(M_i-1)(M_j-1)}{M_i M_j}$\\ \hline
    $\G_{i,j,k}$ & $G_iG_j$ & $G_iH_j$ & $G_jH_i$ & $H_iH_j$\\ \hline
  \end{tabu}
\caption{Densities $\la_{i,j,k}$ and power gains $\G_{i,j,k}$ for each thinning process $k \in \{1,2,3,4\}$, $i,j \in \{b, u\}$.}\label{tbl:thin}
\end{table}

\section{Performance Analysis}\label{sec:analysis}
In this section, we analytically derive the outage probability and the sum rate of an FD cellular network implementing the three-node architecture. The respective expressions for the two-node architecture are omitted since they can be derived in a similar way. The performance analysis is derived using similar procedures as in \cite{CP}, \cite{AH} and \cite{JA}. Without loss of generality and following Slivnyak's Theorem \cite{HAEN}, we execute the analysis for a typical node located at the origin but the results hold for all nodes in the network. Denote by $u_o$, $u'_o$ and $b_o$ the typical downlink user, uplink user and BS respectively.

\begin{figure}
\begin{subfigure}{0.45\linewidth}\centering
\begin{tikzpicture}[scale=0.4]
  \foreach \x/\text in {0.1,0.2,...,1} \draw [dashed,thin,red] circle(\x);
  \foreach \x/\text in {0.01,0.02,...,3.5} {
    \draw [-,thin,blue] (-22.5:\x) arc(-22.5:22.5:\x);
  }
  \foreach \x/\text in {0.1,0.3,...,3.5} {
    \draw [-,thin,green] (67.5:\x) arc(67.5:112.5:\x);
  }
  
  \foreach \ang/\lab/\dir in {
    0/{0}/right,
    1/{\frac{\pi}{4}}/above right,
    2/{\frac{\pi}{2}}/above,
    3/{\frac{3\pi}{4}}/above left,
    4/{\pi}/left,
    5/{\frac{5\pi}{4}}/below left,
    6/{\frac{3\pi}{2}}/below,
    7/{\frac{7\pi}{4}}/below right} {
    \draw (0,0) -- (\ang*45:3.7); \draw[densely dotted] (0,0) -- (22.5+\ang*45:3.5);
    \node [fill=white] at (\ang*45:3.8) [\dir] {\scriptsize $\lab$};
  }
  
  \draw [<->,ultra thin] (0:5.1) arc(0:90:5.1);
  \node [fill=white] at (45:5.1) [above right] {\scriptsize $\theta$};
\end{tikzpicture}
\caption{}\label{fig:ant_angle}
\end{subfigure}\hfill
\begin{subfigure}{0.45\linewidth}\centering
\begin{tikzpicture}[scale=0.35]

\draw[dashed, ultra thin] (0,0) circle (1);
\draw[dashed, ultra thin] (0,0) circle (2); \node [fill=white] at (-2, 0) [below] {\tiny $0.5$};
\draw[dashed, ultra thin] (0,0) circle (3);
\draw[style=double, ultra thin] (0,0) circle (4);
\node[fill=white] at (-4, 0) [below] {\tiny $1$};

\foreach \ang/\lab/\dir in {
  0/0/right,
  1/{\frac{\pi}{4}}/{above right},
  2/{\frac{\pi}{2}}/above,
  3/{\frac{3\pi}{4}}/{above left},
  4/{\pi}/left,
  5/{\frac{5\pi}{4}}/{below left},
  7/{\frac{7\pi}{4}}/{below right},
  6/{\frac{3\pi}{2}}/below} {
  \draw[densely dashed, ultra thin] (0,0) -- (\ang * 180 / 4:4.2);
  \node [fill=white] at (\ang * 180 / 4:4.2) [\dir] {\scriptsize $\lab$};
}

\draw[red!50!black, semithick] plot [domain=-180:180]
  (xy polar cs:angle=\x, radius={4*exp(-cos(abs(\x)-2*180/3)-0.5)});
\node [fill=white] at (1.5,2) {\tiny $f_l(\theta)$};
\end{tikzpicture} 
\caption{}\label{fig:pass_cancel}
\end{subfigure}
\caption{(a) Angle $\theta$ between antennas for $M_b = 8$. Dots correspond to the boundaries of each sector. Shaded area, solid lines and dashed lines depict the main transmission lobe, main reception lobe and side lobes respectively. (b) LI passive suppression efficiency with respect to the angle $\theta$. The value of $f_l(\theta)$ corresponds to the fraction of the LI which cannot be suppressed, that is $f_l(\theta) = 1$ means no suppression.}
\end{figure}
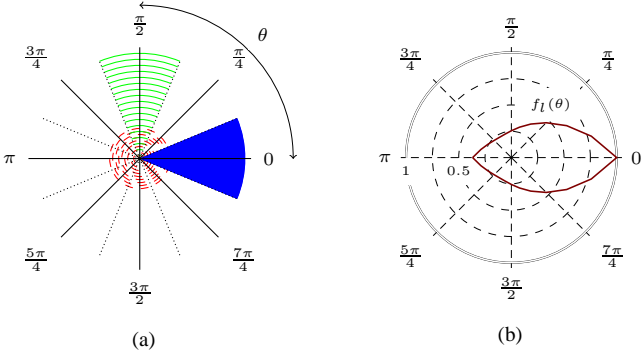

\subsection{Performance Metrics}
The outage probability describes the probability that the instantaneous achievable rate of the channel is less than a fixed target rate $R$, i.e., $\PP[\log(1+\sinr) < R]$, where SINR is the signal-to-interference-plus-noise-ratio. Assume $b_o$ is at a random distance $r$ from $u'_o$, its nearest uplink user. Let $I_u$ and $I_b$ be the aggregate interference received at $b_o$ from the uplink users (apart from $u'_0$) and the out-of-cell BSs respectively. Then $I_u$ and $I_b$ can be expressed as,
\begin{align}
I_u &= P_u \sum_{i \in \{1,2,3,4\}} \sum_{j\in\Psi_i \setminus u'_o} \G_{b,u,i} k_j D_j^{-\alpha_1},\label{eq:intu_u}\\
I_b &= P_b \sum_{i \in \{1,2,3,4\}} \sum_{j\in\Phi_i} \G_{b,b,i} g_j d_j^{-\alpha_2},\label{eq:intb_u}
\end{align}
where $D_j$ and $k_j$ denote the distance and the channel fading respectively between $b_o$ and the $j^{\rm th}$ uplink user; similarly, $g_j$ and $d_j$ denote the distance and the channel fading between $b_o$ and the $j^{\rm th}$ BS.

Imperfect active cancellation mechanisms are used at the BSs to reduce the effects of the LI. Each implementation of the cancellation mechanism can be described by a certain residual power and modeled as a fading feedback channel. We denote the channel gain at a node from the residual LI as $h_l$ and assume that it follows a complex Gaussian distribution with zero mean and variance $\sigma_l^2$ \cite{TR}. Furthermore, we assume that each BS employs antenna separation techniques to passively suppress the LI. Let $\theta \in [-\pi, \pi)$ be the angle between the two antennas (Fig. \ref{fig:ant_angle}). By adopting the results in \cite{AS1} we assume that the maximum suppression is achieved at $\theta = \frac{2\pi}{3}$. Even though the authors of \cite{AS1} also take into account the distance between the BS and the user, we mainly focus on the impact of the angle but our model can be easily extended to incorporate the distance. Observe that passive suppression in the two-node architecture is not possible due to its bidirectional property, i.e., both antennas point to the same direction. Let $f_l(\theta)$ denote the fraction of the LI that cannot be passively suppressed at an angle $\theta$, e.g., $f_l(\theta) = 1$ means zero suppression, and it is written as,
\begin{equation}
f_l(\theta) = \exp\left(-\cos\left(|\theta|-\frac{2\pi}{3}\right)-\frac{1}{2}\right),
\end{equation}
where $\theta \in [-\pi, \pi), \theta \equiv 0 \Mod{\frac{2\pi}{M_b}}$. Fig. \ref{fig:pass_cancel} depicts the level of passive suppression with respect to the angle $\theta$. Denote by $I_l$ the total channel gain from the LI at $b_o$ after active cancellation and passive suppression. Based on the above assumptions,
\begin{equation}\label{eq:li}
I_l = P_u G_b^2 h_l f_l(\theta)(B_0 + \g_b (1-B_0)),
\end{equation}
where $h_l \sim \exp(1/\sigma^2_l)$ and $B_0 \sim {\rm Bernoulli}(\frac{1}{M_b})$ is a binary random variable with
\begin{equation}\label{eq:bern}
B_0 =
\begin{cases}
1 & {\rm with ~prob.~}~ \frac{1}{M_b} \,~~~~(\theta = 0),\\
0 & {\rm with ~prob.~} \frac{M_b-1}{M_b} ~~~(\theta \neq 0).
\end{cases}
\end{equation}
since the power gain of the LI signal is $G_b^2$ for $\theta = 0$ and $G_b H_b$ otherwise. Then, the $\sinr$ at $b_o$ can be defined as,
\begin{equation}\label{eq:sinr_u}
\sinr^u = \frac{P_u \G_{b,u,1} h r^{\alpha_1}}{\sigma_n^2 + I_l + I_b + I_u},
\end{equation}
where $h$ is the channel fading between $b_o$ and $u'_o$.

Note that in \eqref{eq:li} we consider the active cancellation and passive suppression of the LI separately. However, in reality, the active cancellation mechanism attempts to mitigate the passively suppressed LI and therefore a more ``realistic" model would be to express the variance $\sigma^2_l$ as a function of $f_l(\theta)$. For the sake of simplicity, we assume that $f_l(\theta)$ is a normalization factor of $h_l$ which makes no difference in the final results. For the outage probability, we state the following theorem.

\begin{theorem}\label{thm:up}
The outage probability at the uplink in a three-node architecture is,
\begin{align}\nonumber
&\Pi^u(R, \lambda, M_b, M_u, \alpha_1, \alpha_2)\\ &= 1 - 
(2 \pi \lambda)^2 \int_0^\infty r e^{-\lambda \pi r^2-s\sigma_n^2}
\La_{I_l}(s) \La_{I_b}(s) \La_{I_u}(s) \dd r,
\end{align}
where $s = \frac{\mu Tr^{\alpha_1}}{P_u G_b G_u}$, $T=2^R-1$,
\begin{align}
\La_{I_l}(s)=&
\frac{1}{M_b}\Bigg[\frac{1}{1+\frac{P_b G_b}{P_u G_u}\mu\sigma_l^2 Tr^{\alpha_1}}\nonumber\\
&+\sum\limits_{\substack{\theta \in [-\pi, \pi) \setminus \{0\}\\\theta \equiv 0 \Mod{\frac{2\pi}{M_b}}}}\frac{1}{1+\frac{P_b H_b}{P_u G_u}\frac{\mu\sigma_l^2 Tr^{\alpha_1}}{\exp(\cos(|\theta|-\frac{2\pi}{3})+\frac{1}{2})}}\Bigg],\\\label{eq:laplace_li_3u}
\La_{I_b}(s)=&\int_0^\infty \rho e^{-\la \pi \rho^2}\prod_{\mathclap{i\in\{1,2,3,4\}}}\exp\Bigg(-2\pi\la_{b,b,i}\nonumber\\ &\times\int_\rho^\infty \left(\frac{\frac{P_b}{P_u}c_iT}{\frac{P_b}{P_u}c_i T + \frac{x^{\alpha_2}}{r^{\alpha_1}}} \right) x \dd x \Bigg)\dd \rho,
\end{align}
\begin{align}
&\La_{I_u}(s)=\prod_{\mathclap{i\in\{1,2,3,4\}}}\exp\left(-2\pi\la_{b,u,i} \int_r^\infty \left(\frac{q_i T}{q_i T + (\frac{y}{r})^{\alpha_1}} \right) y \dd y \right),
\end{align}
and $c_i = \frac{\G_{b,b,i}}{\G_{b,u,1}}$, $q_i = \frac{\G_{b,u,i}}{\G_{b,u,1}}$.
\end{theorem}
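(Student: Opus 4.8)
The plan is to follow the standard Rayleigh/PPP outage recipe: turn the outage into the complement of a success event, use the exponential law of the useful-link fading to reduce that event to a product of Laplace transforms, and evaluate each transform separately. With $T=2^R-1$, the event $\log(1+\sinr^u)<R$ is exactly $\sinr^u<T$, hence $\Pi^u=1-\PP[\sinr^u\ge T]$. I would condition on the distance $r$ to the served uplink user $u'_o$ (pdf $2\pi\la r e^{-\la\pi r^2}$) and isolate the desired fading $h\sim\exp(\mu)$ in \eqref{eq:sinr_u}, so that $\sinr^u\ge T$ becomes a lower threshold on $h$. Since $\PP[h\ge x]=e^{-\mu x}$ and $I_l,I_b,I_u$ are mutually independent (the residual-LI fading is independent of both point processes, and $\Phi$ is independent of $\Psi$), this gives $\PP[\sinr^u\ge T\mid r]=e^{-s\sigma_n^2}\,\E[e^{-sI_l}]\,\E[e^{-sI_b}]\,\E[e^{-sI_u}]$ with $s$ as defined in the statement. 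Averaging over $r$ and collecting prefactors will then produce the claimed integral.

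I would obtain $\E[e^{-sI_l}]=\La_{I_l}(s)$ from \eqref{eq:li}--\eqref{eq:bern} by a two-stage average. Averaging first over $h_l\sim\exp(1/\sigma_l^2)$ turns each conditional term into a rational factor of the form $1/(1+s\,\sigma_l^2\,\cdot\text{(power gain)})$. Averaging next over the joint law of $(\theta,B_0)$: the atom $\theta=0$ (probability $\tfrac{1}{M_b}$) carries the power gain $G_b^2$ and $f_l(0)=1$, giving the first bracketed term; each of the $M_b-1$ admissible angles $\theta\equiv 0\Mod{\frac{2\pi}{M_b}}$ with $\theta\ne 0$ (each of probability $\tfrac{1}{M_b}$) carries the gain $G_bH_b$ and the factor $f_l(\theta)$, whose reciprocal is $\exp(\cos(|\theta|-\frac{2\pi}{3})+\frac12)$. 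Summing these $\tfrac{1}{M_b}$-weighted contributions reproduces the stated $\La_{I_l}(s)$.

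For the two field terms I would invoke the probability generating functional (PGFL) of the PPP on each of the four thinning processes of Table \ref{tbl:thin}. Applying the PGFL to $\Psi_i$ of density $\la_{b,u,i}$ and averaging over the exponential fading $k$ yields a factor $\exp(-2\pi\la_{b,u,i}\int \tfrac{a}{\mu+a}\,y\,\dd y)$ with $a=sP_u\G_{b,u,i}y^{-\alpha_1}$, which collapses to the displayed integrand with $q_i=\G_{b,u,i}/\G_{b,u,1}$; the lower limit $r$ enters because $u'_o$ is the nearest uplink user, so the interferers occupy $\R^2\setminus b(o,r)$. The BS term is analogous on $\Phi_i$, $\la_{b,b,i}$, fading $g$, and $c_i=\G_{b,b,i}/\G_{b,u,1}$, but here I must also condition on the distance $\rho$ to the nearest interfering BS, placing the interferers in $\R^2\setminus b(o,\rho)$ (lower limit $\rho$); averaging this conditional transform over $\rho$ with pdf $2\pi\la\rho e^{-\la\pi\rho^2}$ gives $\E[e^{-sI_b}]$. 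The statement folds the $\rho$-density (up to the factor $2\pi\la$) into $\La_{I_b}(s)$, so that the $r$- and $\rho$-averages together furnish the $(2\pi\la)^2$ prefactor.

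The fading expectations and the PGFL evaluations are routine and reduce to elementary integrals. The main obstacle is the geometric bookkeeping of the interference fields: the tagged user must be excluded from $I_u$ by the guard region $b(o,r)$, whereas $I_b$ must be conditioned on, and averaged over, the independent nearest-BS distance $\rho$. It is precisely this second, independent nearest-neighbour average that separates the single $r$-integral from the inner $\rho$-integral and accounts for the $(2\pi\la)^2$ prefactor; keeping the two distances (and their two guard balls) distinct is the step most prone to error. A secondary care point is the LI average, where the $\theta=0$ atom and the nonzero admissible angles must be weighted consistently with the Bernoulli gain selection in \eqref{eq:bern}, since they carry the different gains $G_b^2$ and $G_bH_b$.
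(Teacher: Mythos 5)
Your proposal is correct and follows essentially the same route as the paper's own proof: conditioning on the nearest-user distance $r$, factorizing the coverage probability via the exponential law of $h$ into the product $e^{-s\sigma_n^2}\La_{I_l}(s)\La_{I_b}(s)\La_{I_u}(s)$, computing $\La_{I_l}$ by averaging over $h_l$ and the $(\theta,B_0)$ atoms, and evaluating $\La_{I_b}$, $\La_{I_u}$ by the PGFL on the four thinned processes with guard radii $\rho$ and $r$, the nearest-BS average over $\rho$ supplying the second $2\pi\la$ factor exactly as in the appendix. No gaps; your bookkeeping of the two distinct guard balls and the $\tfrac{1}{M_b}$-weighted LI terms matches the paper's argument.
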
\medskip

\begin{proof}
See Appendix.
\end{proof}

\begin{figure*}[t]
  \begin{minipage}{0.32\linewidth}\centering
    \includegraphics[width=\linewidth]{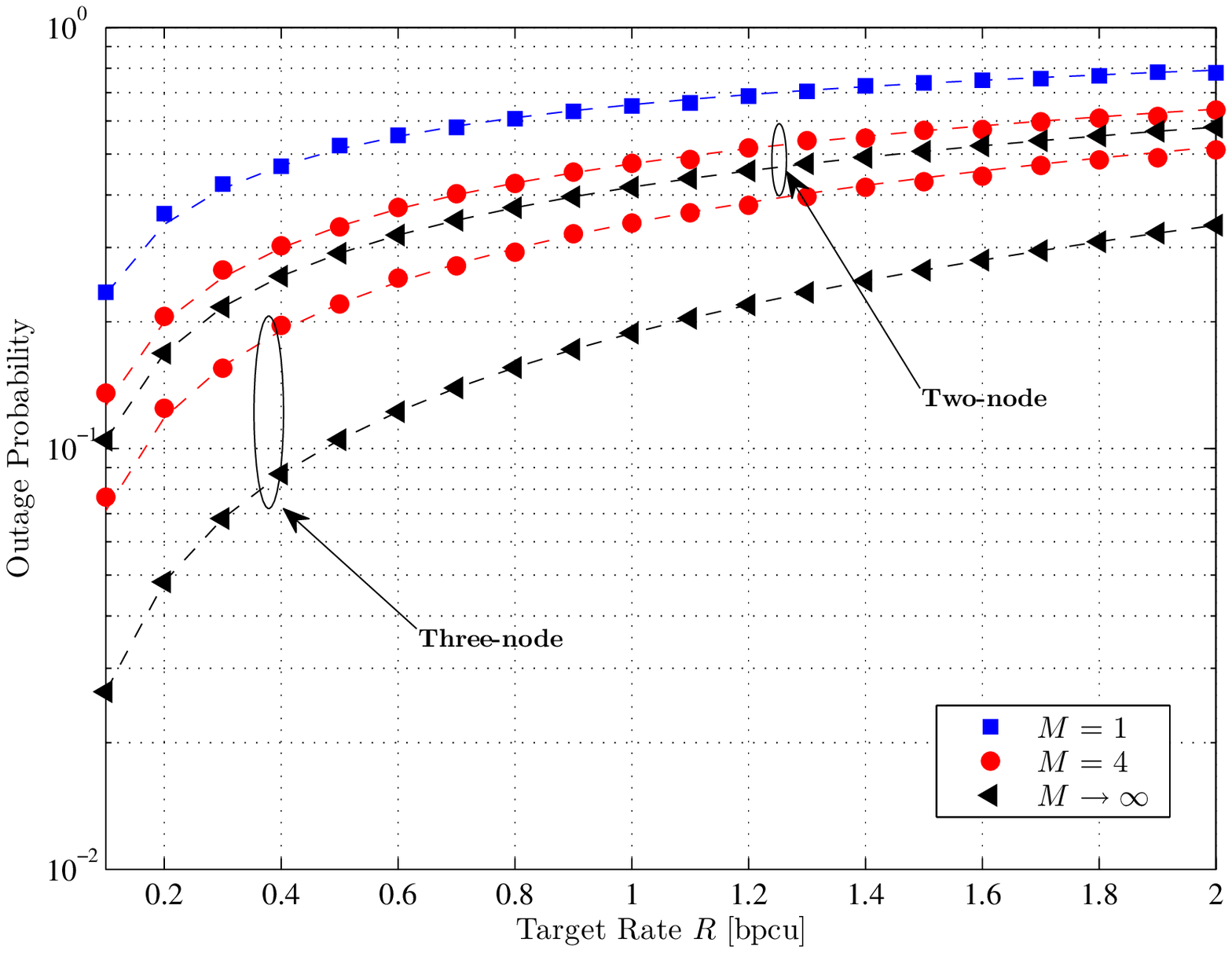}
    \captionof{figure}{Impact of directionality on the uplink outage probability; $\sigma_l^2=-30$dB.}\label{fig:outage_vs_bpcu}
  \end{minipage}\hspace{2mm}
  \begin{minipage}{0.32\linewidth}\centering
    \includegraphics[width=\linewidth]{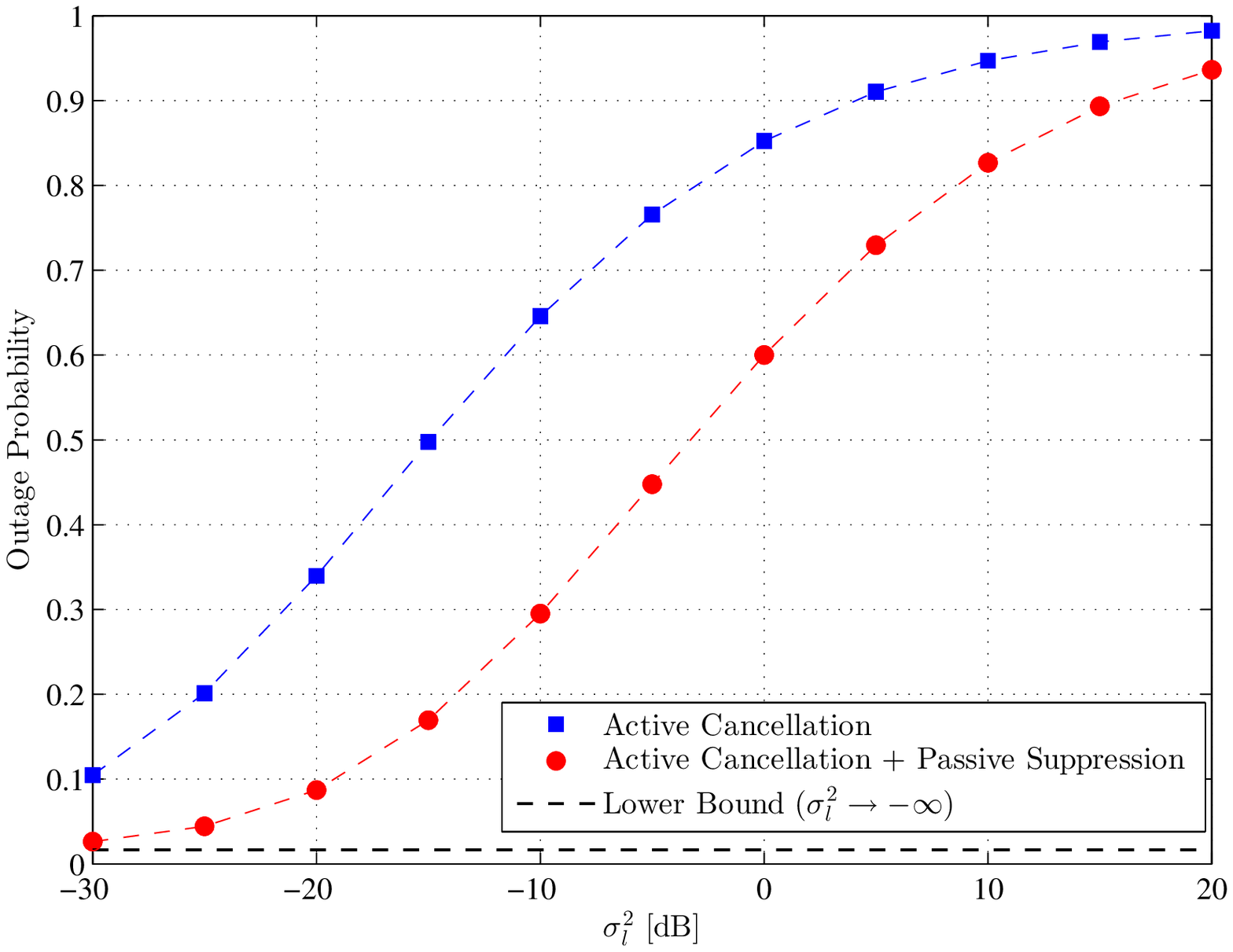}
    \captionof{figure}{Impact of passive suppression on the outage probability; $R = 0.1$ bpcu.}\label{fig:outage_vs_li}
  \end{minipage}\hspace{2mm}
  \begin{minipage}{0.32\linewidth}\centering
    \includegraphics[width=\linewidth]{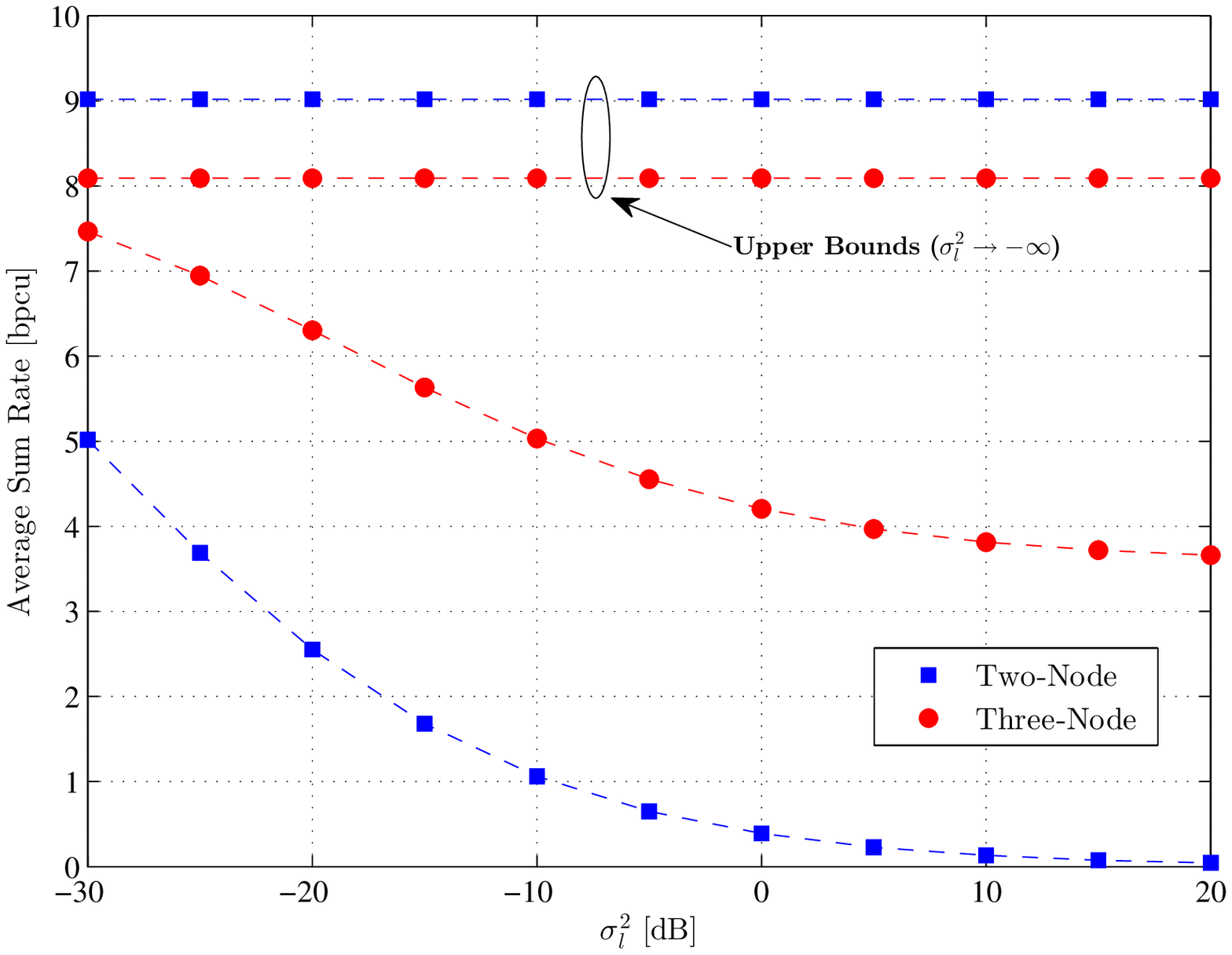}
    \captionof{figure}{Average sum rate vs $\sigma_l^2$; $R = 0.1$ bpcu.}\label{fig:rate_vs_li}
  \end{minipage}
\end{figure*}

The multiuser interference experienced by the downlink user $u_o$ is explained in Section \ref{subsec:inter}. Furthermore, $u_o$ operates in half-duplex mode and therefore does not experience any LI, i.e., $I_l = 0$. We omit the SINR equation at the downlink and the proof of the outage probability since these can be derived in a similar way as above and state only the final result.
\begin{theorem}\label{thm:three_d}
The outage probability of a downlink user is,
\begin{align}
&\Pi^d(R, \lambda, M_b, M_u, \alpha_1, \alpha_2)\nonumber\\ &= 1 -
2 \pi \lambda \int_0^\infty r e^{-\lambda \pi r^2-s\sigma_n^2}
\La_{I_b}(s) \La_{I_u}(s) \dd r,
\end{align}
where $s = \frac{\mu Tr^{\alpha_1}}{P_b G_b G_u}$, $T=2^R-1$,
\begin{align}
&\La_{I_b}(s) = \prod_{\mathclap{i\in\{1,2,3,4\}}}\exp\left(-2\pi\la_{u,b,i} \int_r^\infty \left(\frac{c_i T}{c_i T + (\frac{x}{r})^{\alpha_1}} \right) x \dd x \right),\\
&\La_{I_u}(s) = \prod_{\mathclap{i\in\{1,2,3,4\}}}\exp\left(-2\pi\la_{u,u,i} \int_0^\infty \left(\frac{\frac{P_u}{P_b} q_i T}{\frac{P_u}{P_b} q_i T + \frac{y^{\alpha_2}}{r^{\alpha_1}}} \right) y \dd y \right),
\end{align}
and $c_i = \frac{\G_{u,b,i}}{\G_{u,b,1}}$, $q_i = \frac{\G_{u,u,i}}{\G_{u,b,1}}$.
\end{theorem}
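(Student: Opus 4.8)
The plan is to mirror the uplink argument of Theorem~\ref{thm:up}, exploiting the one structural simplification of the downlink: the typical user $u_o$ operates in half-duplex, so $I_l=0$ and the factor $\La_{I_l}(s)$ is absent. First I would reconstruct the downlink SINR. With $u_o$ at the origin and $b_o$ its nearest BS at distance $r$, the desired signal is a BS--user link of power $P_b$ and gain $\G_{u,b,1}=G_uG_b$, so that the received signal power is $P_b\G_{u,b,1}h\,r^{-\alpha_1}$ and
\begin{equation*}
\sinr^d=\frac{P_b\,\G_{u,b,1}\,h\,r^{-\alpha_1}}{\sigma_n^2+I_b+I_u},
\end{equation*}
where $I_b$ aggregates the out-of-cell BS interference over $\Phi$ (BS--user links, exponent $\alpha_1$, gains $\G_{u,b,i}$) and $I_u$ aggregates the uplink-user interference over $\Psi$ (user--user links, exponent $\alpha_2$, gains $\G_{u,u,i}$), each split into the four thinnings of Table~\ref{tbl:thin}.

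Next, conditioning on $r$ and using $h\sim\exp(\mu)$, the identity $\PP[h<z]=1-e^{-\mu z}$ turns the outage $\PP[\sinr^d<T]$, with $T=2^R-1$, into
\begin{equation*}
\PP[\sinr^d<T\mid r]=1-e^{-s\sigma_n^2}\,\La_{I_b}(s)\,\La_{I_u}(s),\quad s=\tfrac{\mu T r^{\alpha_1}}{P_bG_uG_b},
\end{equation*}
where the product factorization uses the independence of $I_b$ and $I_u$ (the processes $\Phi$ and $\Psi$ are independent). I would then evaluate each Laplace transform via the probability generating functional of the thinned PPPs. For $I_b$, averaging the exponential fading gives $\E_m[e^{-am}]=(1+a/\mu)^{-1}$, and the PGFL of the density-$\la_{u,b,i}$ process yields $\exp\!\big(-2\pi\la_{u,b,i}\int \tfrac{sP_b\G_{u,b,i}/\mu}{x^{\alpha_1}+sP_b\G_{u,b,i}/\mu}\,x\,\dd x\big)$ after passing to polar coordinates; $I_u$ is identical with $(P_u,\G_{u,u,i},\alpha_2,\la_{u,u,i})$. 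Inserting $s$ and abbreviating $c_i=\G_{u,b,i}/\G_{u,b,1}$, $q_i=\G_{u,u,i}/\G_{u,b,1}$ collapses the integrands to $\frac{c_iT}{c_iT+(x/r)^{\alpha_1}}$ and $\frac{(P_u/P_b)q_iT}{(P_u/P_b)q_iT+y^{\alpha_2}/r^{\alpha_1}}$. Finally I would average over $r$ with $f_r(r)=2\pi\la r e^{-\la\pi r^2}$ to recover the stated $\Pi^d$.

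The step requiring the most care is fixing the lower limits of the radial integrals, since this is exactly where the downlink departs from the uplink. For $\La_{I_b}$ every interfering BS must lie outside the serving cell: as $b_o$ is the nearest BS to $u_o$, all other BSs are at distance $>r$, so that integral starts at $r$. For $\La_{I_u}$ there is no exclusion region, because in the three-node architecture the intra-cell uplink user can be arbitrarily close to $u_o$, so that integral starts at $0$. This same observation explains why the prefactor is a single $2\pi\la$, coming only from $f_r$, rather than the $(2\pi\la)^2$ of Theorem~\ref{thm:up}: the uplink required an extra average over an independent nearest-interfering-BS distance $\rho$, whereas here the serving-BS distance $r$ is simultaneously the outer integration variable and the BS-exclusion radius, so no second spatial average is introduced.
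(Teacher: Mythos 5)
Your proposal is correct and follows exactly the route the paper intends: the paper omits this proof as ``derived in a similar way'' to Theorem~\ref{thm:up}, and your reconstruction is precisely that adaptation -- conditioning on $r$, factorizing via independence of $\Phi$ and $\Psi$, applying the PGFL with exponential-fading MGF, and dropping the $\La_{I_l}$ factor since $u_o$ is half-duplex. You also correctly identify the two points where the downlink differs from the appendix argument, namely the exclusion radius $r$ for interfering BSs versus the lower limit $0$ for uplink users (no exclusion, including the intra-cell uplink interferer), and the single $2\pi\la$ prefactor arising because the serving-BS distance doubles as the exclusion radius, eliminating the extra average over $\rho$.
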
\medskip

Next, we provide an expression for the average sum rate $\bar{\mathcal{R}}$ of the network. The average sum rate is the sum of the expected values of the instantaneous achievable downlink rate $\bar{\mathcal{R}}^d$ and uplink rate $\bar{\mathcal{R}}^u$, i.e., $\bar{\mathcal{R}} = \mathcal{\bar{R}}^u + \mathcal{\bar{R}}^d = \E[\log(1+\sinr^u)] + \E[\log(1+\sinr^d)]$.
\begin{proposition}
The average sum rate of a three-node FD cellular network is,
\begin{align}
\bar{\mathcal{R}} &= \int_0^\infty \left[1-\Pi^u(t, \lambda, M_b, M_u, \alpha_1, \alpha_2)\right]\dd t\nonumber\\&+ \int_0^\infty \left[1-\Pi^d(t, \lambda, M_b, M_u, \alpha_1, \alpha_2)\right]\dd t.
\end{align}
\end{proposition}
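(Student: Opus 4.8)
The plan is to reduce the statement to the elementary tail-integral representation of the expectation of a nonnegative random variable. By definition $\bar{\mathcal{R}} = \E[\log(1+\sinr^u)] + \E[\log(1+\sinr^d)]$, and since $\sinr^u, \sinr^d \geq 0$ almost surely, both $\log(1+\sinr^u)$ and $\log(1+\sinr^d)$ are nonnegative random variables. For any such variable $X$ one has $\E[X] = \int_0^\infty \PP[X > t]\,\dd t$, which I would invoke to write $\E[\log(1+\sinr^u)] = \int_0^\infty \PP[\log(1+\sinr^u) > t]\,\dd t$, and likewise for the downlink term.

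The second step is to identify the integrand with $1 - \Pi^u(t)$. Recall that the outage probability was defined as $\Pi^u(R) = \PP[\log(1+\sinr^u) < R]$, so $\PP[\log(1+\sinr^u) > t] = 1 - \PP[\log(1+\sinr^u) \leq t]$. Because all channel fadings are continuously distributed (Rayleigh small-scale fading, hence exponential power), $\sinr^u$ and thus $\log(1+\sinr^u)$ admit a density, so $\PP[\log(1+\sinr^u) = t] = 0$ for every $t$. Hence $\PP[\log(1+\sinr^u) > t] = 1 - \Pi^u(t)$, where $\Pi^u(t)$ is exactly the expression supplied by Theorem \ref{thm:up} with the target rate replaced by the integration variable $t$ and all other parameters held fixed. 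The identical argument applied to $\sinr^d$ and Theorem \ref{thm:three_d} gives $\PP[\log(1+\sinr^d) > t] = 1 - \Pi^d(t)$.

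Substituting these into the two tail integrals and summing yields the claimed formula. Since the argument hinges only on the tail-integral identity together with the continuity of the SINR distribution, there is essentially no hard step; the single point requiring a little care is the passage from the strict inequality in the definition of outage to the complementary-cdf form, which is precisely where the absence of atoms in the SINR distribution is used. As a secondary bookkeeping remark, the result is stated purely in terms of the two outage functions, so no further manipulation of the Laplace transforms $\La_{I_l}$, $\La_{I_b}$, $\La_{I_u}$ is needed here, as these are already packaged inside $\Pi^u$ and $\Pi^d$.
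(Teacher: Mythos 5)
Your proposal is correct and follows the same route as the paper, which proves the proposition in one line via the tail-integral identity $\E(X) = \int_0^\infty \PP(X > t)\,\dd t$ for a nonnegative random variable. Your additional remark on the absence of atoms in the SINR distribution (justifying the passage from the strict-inequality outage definition to the complementary cdf) is a detail the paper leaves implicit, but it does not change the argument.
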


\begin{proof}
Since $\E(X) = \int_0^\infty \PP(X > t) \dd t$ for a positive random variable $X$, the result follows.
\end{proof}

\subsection{Asymptotic Case}\label{sec:asymptotic}
To simplify the above expressions, let $M = M_b = M_u$ and consider the asymptotic case when the number of employed antennas goes to infinity. Furthermore, let $\alpha_1 = \alpha_2 = 4$ and $\g = \g_b = \g_u$. Assume that the BSs and the users transmit with the same power, i.e., $P_b = P_u$, and consider high power transmissions which result in an interference-limited network, that is $\sigma_n^2 = 0$. Then, using firstly the transformations $\frac{1}{\sqrt{c_i T}}(\frac{x}{r})^2 \rightarrow u$ in $\La_{I_b}(s)$ and $\frac{1}{\sqrt{q_i T}}(\frac{y}{r})^2 \rightarrow v$ in $\La_{I_u}(s)$ and secondly the transformations $r^2 \rightarrow z$ and $\rho^2 \rightarrow w$, the outage probability as $M$ approaches infinity changes to,
\begin{equation}
\lim_{M\rightarrow \infty}\Pi^u = 1 - (\pi \lambda)^2 \int_0^\infty e^{-\lambda \pi z} \mathcal{I}_l \mathcal{I}_b \mathcal{I}_u \dd z,
\end{equation}
\begin{align}
&\mathcal{I}_l = \frac{1}{2\pi}\int_{-\pi}^{\pi}\frac{1}{1+\mu\sigma_l^2 T z^2 \exp(-\cos(|\theta|-\frac{2\pi}{3})-\frac{1}{2})}~\dd \theta,\\
&\mathcal{I}_b=\int_0^\infty \exp\left(-\pi\la (w + z \g \sqrt{T} \arctan(z \g\sqrt{T}/w))\right)\dd w,\\
&\mathcal{I}_u = \exp\left(-\pi\la z \sqrt{T} \g \arctan(\g\sqrt{T}) \right),
\end{align}
for the uplink and,
\begin{equation}\label{eq:spec_3d}
\lim_{M\rightarrow \infty}\Pi^d = 1 - \frac{1}{1+ \g \sqrt{T}\left(\arctan\left(\g\sqrt{T}\right)+\frac{\pi}{2}\right)},
\end{equation}
for the downlink.

It is clear that $\Pi^d$ is independent from the density $\la$ of the network and depends only on the target rate $R$ and the ratio $\g$. Also, when $\g \rightarrow 0$ then $\Pi^d \rightarrow 0$, which is expected since in this case there will be no multiuser interference. On the other hand, $\Pi^u$ heavily depends on the value of $\sigma_l^2$. When $\sigma_l^2 > 0$ and $\g \rightarrow 0$, $\Pi^u$ depends entirely on the function $\mathcal{I}_l$ and on the density $\la$, since in this case $\mathcal{I}_b = \frac{1}{\pi\la}$ and $\mathcal{I}_u = 1$. When $\sigma_l^2 \rightarrow -\infty$, $\Pi^u$ behaves similarly to $\Pi^d$: it becomes independent of $\la$ (even though it is not entirely clear) and its value converges to zero when $\g \rightarrow 0$. 

\section{Numerical Results}\label{sec:validation}
We validate and evaluate our proposed model with computer simulations. Unless otherwise stated, the simulations use the parameters from Section \ref{sec:asymptotic}, together with $\lambda = 10^{-2}$, $\mu = 1$ and $\g = 0.2$. In Figures \ref{fig:outage_vs_bpcu}-\ref{fig:rate_vs_li}, the dashed lines represent the analytical results. Firstly, we illustrate in Figure \ref{fig:outage_vs_bpcu} the impact on the uplink outage probability from the employment of directional antennas. Both architectures benefit greatly from directionality but the three-node achieves a better performance due to the BS's ability to passively suppress the LI. This is shown in Figure \ref{fig:outage_vs_li} where we depict the performance of a BS in terms of the outage probability, with and without passive suppression, for different values of $\sigma_l^2$. In the two extreme cases, $\sigma_l^2 \rightarrow -\infty$ and $\sigma_l^2 \rightarrow \infty$, the two methods have the same performance. In the former case, $\Pi^u$ converges to a constant floor and in the latter case $\Pi^u \rightarrow 1$. However, for moderate values, passive suppression provides significant gains, e.g., for $\sigma_l^2 = -20$ dB it achieves about $70\%$ reduction of $\Pi^u$. Finally, Figure \ref{fig:rate_vs_li} shows the average sum rate of each architecture with respect to $\sigma_l^2$. The sum rate of the three-node architecture is obviously greater than the sum rate of the two-node architecture. This is in part due to the LI passive suppression at the uplink but also due to the half-duplex mode at the downlink which is not affected by LI. When $\sigma_l^2 \rightarrow \infty$, the sum rate of the three-node converges to the rate of the downlink whereas the sum rate of the two-node converges to zero. On the other hand, when $\sigma_l^2 \rightarrow -\infty$, the two-node outperforms the three-node as expected since both nodes operate FD mode but this scenario is difficult to achieve which is also evident from the figure. 

\section{Conclusion}\label{sec:conclusion}
This paper has presented the impact of directional antennas on the performance of FD cellular networks. The ability of the three-node architecture to passively suppress the LI at the uplink has significant gains to its efficiency. Moreover, since the downlink user operates in half-duplex mode, the network can achieve high sum rates. The three-node architecture is regarded as the topology to be potentially implemented first in the case of FD employment in cellular networks. The main reason is the high energy requirements which FD will impose on future devices. The results of this paper, give insight as to how such an architecture will perform and provide another reason to support its implementation.

\appendix
Conditioned on the distance $r$ to the nearest uplink user we have,
\begin{align*}
&\Pi^u(R, \lambda, M_b, M_u, \alpha_1, \alpha_2) =
\E_r \left[ \PP[\log(1+\sinr^u) < R\ |\ r] \right]\\
&=1-2\pi \lambda\int_0^\infty \PP[\sinr^u \geq 2^R-1\ |\ r] \;re^{-\lambda \pi r^2}\dd r.
\end{align*}
The coverage probability $\PP[\sinr^u \geq T\ |\ r]$ is given by,
\begin{align*}
\PP[\sinr^u \geq T\ |\ r] &= \PP\left[h \geq \frac{Tr^{\alpha_1}}{P_b G_b G_u}(\sigma_n^2 + I_l + I_b + I_u)\ \Big|\ r\right]\\
&\stackrel{(a)}{=} e^{-s\sigma_n^2}\E_{I_l}\left[e^{-sI_l}\right]
\E_{I_b}\left[e^{-sI_b}\right]\E_{I_u}\left[e^{-sI_u}\right]\\
&= e^{-s\sigma_n^2}\La_{I_l}(s) \La_{I_b}(s) \La_{I_u}(s),
\end{align*}
where $s = \frac{\mu Tr^{\alpha_1}}{P_b G_b G_u}$, $T=2^R-1$ and $(a)$ follows from the fact that $h \sim \exp(\mu)$; $\La_{I_l}(s)$, $\La_{I_b}(s)$ and $\La_{I_u}(s)$ are the Laplace transforms of the random variables $I_l$, $I_b$ and $I_u$ respectively, evaluated at $s$. The Laplace transform of $I_l$ follows directly from the Bernoulli variable and from the moment generating function (MGF) of an exponential variable since $h_l \sim \exp(1/\sigma^2_l)$. Next we evaluate $\La_{I_b}(s)$ as follows. Assuming the distance to the closest BS from $b_o$ is $\rho$, $I_b$ needs to be evaluated conditioned on $\rho$. Since the densities of $\Phi$ and $\Psi$ are equal, we can assume that there is on average one BS per cell. Therefore, the Laplace transform of $I_b$ is given by,
\begin{equation}\label{eq:laplace_I_b}
\La_{I_b}(s) = \E_{I_b}[e^{-sI_b}\ |\ \rho] = \int_0^\infty \E_{I_b}[e^{-sI_b}]f_{\rho}(\rho)\dd \rho.
\end{equation}
Then $\E_{I_b}[e^{-sI_b}]$ is evaluated as follows,
\begin{align}
\nonumber &\E_{I_b}[e^{-sI_b}] = \prod_{i\in\{1,2,3,4\}} \E_{\Phi_i, g_j}\left[\prod_{j\in\Phi_i}e^{-s P_b \G_{b,b,i} g_j d_j^{-\alpha_2}}\right]\nonumber\\
&\stackrel{(a)}{=} \prod_{i\in\{1,2,3,4\}} \E_{\Phi_i}\left[\prod_{j\in\Phi_i}\E_g[e^{-s P_b \G_{b,b,i} g d_j^{-\alpha_2}}]\right]\nonumber\\
\nonumber&\stackrel{(b)}{=} \prod_{i\in\{1,2,3,4\}}e^{-2\pi\la_{b,b,i} \int_\rho^\infty \left(1-\E_g[\exp(-s P_b \G_{b,b,i} g y^{-\alpha_2})] \right) y \dd y}\\
&\stackrel{(c)}{=} \prod_{i\in\{1,2,3,4\}}e^{-2\pi\la_{u,u,i} \int_\rho^\infty \left(1 - \frac{\mu}{\mu + s P_u \G_{u,u,i} y^{-\alpha_2}} \right) y \dd y}\label{eq:exp_I_b},
\end{align}
where $(a)$ follows from the fact that $g_j$ are i.i.d. and also independent from the point process $\Phi$; $(b)$ follows from the probability generating functional (PGFL) of a PPP and the limits follow from the closest BS being at a distance $\rho$; $(c)$ follows from the MGF of an exponential random variable and since $g \sim \exp(\mu)$. The results follows by replacing $\E_{I_b}[e^{-sI_b}]$ with \eqref{eq:exp_I_b}, $s$ with $\frac{\mu Tr^{\alpha_1}}{P_b G_b G_u}$ in \eqref{eq:laplace_I_b} and letting $q_i=\frac{\G_{u,u,i}}{\G_{u,b,1}}$. $\La_{I_u}(s)$ can be derived in a similar way.

\end{document}